  \providecommand{\norm}[1]{\ensuremath{\left\Vert #1 \right\Vert}}
\DeclareMathOperator*{\dist}{dist}  
\providecommand{\norm}[1]{\ensuremath{\left\Vert #1 \right\Vert}}
\newtheorem{lem}{Lemma}
\newtheorem{defn}{Definition}
\newtheorem{theorem}{Theorem}
\newtheorem{axiom}[theorem]{Axiom}
\newtheorem{conjecture}[theorem]{Conjecture}
\newtheorem{corollary}[theorem]{Corollary}
\newtheorem{definition}[theorem]{Definition}
\newtheorem{example}[theorem]{Example}
\newtheorem{exercise}[theorem]{Exercise}
\newtheorem{lemma}[theorem]{Lemma}
\newtheorem{remark}[theorem]{Remark}
\newtheorem{proposition}[theorem]{Proposition}
\let\pdfoutput=\undefined\fi
\chardef\@x10\chardef\@xv60
\def\tcitime{
\def\@time{%
  \@minute\time\@hour\@minute\divide\@hour\@xv
  \ifnum\@hour<\@x 0\fi\the\@hour:%
  \multiply\@hour\@xv\advance\@minute-\@hour
  \ifnum\@minute<\@x 0\fi\the\@minute
  }}%
\def\x@hyperref#1#2#3{%
   \catcode`\~ = 12
   \catcode`\$ = 12
   \catcode`\_ = 12
   \catcode`\# = 12
   \catcode`\& = 12
   \catcode`\% = 12
   \y@hyperref{#1}{#2}{#3}%
}
\def\y@hyperref#1#2#3#4{%
   #2\ref{#4}#3
   \catcode`\~ = 13
   \catcode`\$ = 3
   \catcode`\_ = 8
   \catcode`\# = 6
   \catcode`\& = 4
   \catcode`\% = 14
}
\def\QCTOpt[#1]#2{%
  \def\QCTOptB{#1}
  \def\QCTOptA{#2}
}
\def\QCTNOpt#1{%
  \def\QCTOptA{#1}
  \let\QCTOptB\empty
}
\def\Qct{%
  \@ifnextchar[{%
    \QCTOpt}{\QCTNOpt}
}
\def\QCBOpt[#1]#2{%
  \def\QCBOptB{#1}%
  \def\QCBOptA{#2}%
}
\def\QCBNOpt#1{%
  \def\QCBOptA{#1}%
  \let\QCBOptB\empty
}
\def\Qcb{%
  \@ifnextchar[{%
    \QCBOpt}{\QCBNOpt}%
}
\def\PrepCapArgs{%
  \ifx\QCBOptA\empty
    \ifx\QCTOptA\empty
      {}%
    \else
      \ifx\QCTOptB\empty
        {\QCTOptA}%
      \else
        [\QCTOptB]{\QCTOptA}%
      \fi
    \fi
  \else
    \ifx\QCBOptA\empty
      {}%
    \else
      \ifx\QCBOptB\empty
        {\QCBOptA}%
      \else
        [\QCBOptB]{\QCBOptA}%
      \fi
    \fi
  \fi
}
\def\GRAPHICSPS#1{%
 \ifcase\GRAPHICSTYPE
   \special{ps: #1}%
 \or
   \special{language "PS", include "#1"}%
 \fi
}%
\def\graffile#1#2#3#4{%
    \bgroup
	   \@inlabelfalse
       \leavevmode
       \@ifundefined{bbl@deactivate}{\def~{\string~}}{\activesoff}%
        \raise -#4 \BOXTHEFRAME{%
           \hbox to #2{\raise #3\hbox to #2{\null #1\hfil}}}%
    \egroup
}%
\def\draftbox#1#2#3#4{%
 \leavevmode\raise -#4 \hbox{%
  \frame{\rlap{\protect\tiny #1}\hbox to #2%
   {\vrule height#3 width\z@ depth\z@\hfil}%
  }%
 }%
}%
\let\nographics=\@msidraft
\newif\ifwasdraft
\def\GRAPHIC#1#2#3#4#5{%
   \ifnum\@msidraft=\@ne\draftbox{#2}{#3}{#4}{#5}%
   \else\graffile{#1}{#3}{#4}{#5}%
   \fi
}
\def\addtoLaTeXparams#1{%
    \edef\LaTeXparams{\LaTeXparams #1}}%
\newif\ifBoxFrame \BoxFramefalse
\newif\ifOverFrame \OverFramefalse
\newif\ifUnderFrame \UnderFramefalse
\def\BOXTHEFRAME#1{%
   \hbox{%
      \ifBoxFrame
         \frame{#1}%
      \else
         {#1}%
      \fi
   }%
}
\def\doFRAMEparams#1{\BoxFramefalse\OverFramefalse\UnderFramefalse\readFRAMEparams#1\end}%
\def\readFRAMEparams#1{%
 \ifx#1\end%
  \let\next=\relax
  \else
  \ifx#1i\dispkind=\z@\fi
  \ifx#1d\dispkind=\@ne\fi
  \ifx#1f\dispkind=\tw@\fi
  \ifx#1t\addtoLaTeXparams{t}\fi
  \ifx#1b\addtoLaTeXparams{b}\fi
  \ifx#1p\addtoLaTeXparams{p}\fi
  \ifx#1h\addtoLaTeXparams{h}\fi
  \ifx#1X\BoxFrametrue\fi
  \ifx#1O\OverFrametrue\fi
  \ifx#1U\UnderFrametrue\fi
  \ifx#1w
    \ifnum\@msidraft=1\wasdrafttrue\else\wasdraftfalse\fi
    \@msidraft=\@ne
  \fi
  \let\next=\readFRAMEparams
  \fi
 \next
 }%
\def\IFRAME#1#2#3#4#5#6{%
      \bgroup
      \let\QCTOptA\empty
      \let\QCTOptB\empty
      \let\QCBOptA\empty
      \let\QCBOptB\empty
      #6%
      \parindent=0pt
      \leftskip=0pt
      \rightskip=0pt
      \setbox0=\hbox{\QCBOptA}%
      \@tempdima=#1\relax
      \ifOverFrame
          \typeout{This is not implemented yet}%
          \show\HELP
      \else
         \ifdim\wd0>\@tempdima
            \advance\@tempdima by \@tempdima
            \ifdim\wd0 >\@tempdima
               \setbox1 =\vbox{%
                  \unskip\hbox to \@tempdima{\hfill\GRAPHIC{#5}{#4}{#1}{#2}{#3}\hfill}%
                  \unskip\hbox to \@tempdima{\parbox[b]{\@tempdima}{\QCBOptA}}%
               }%
               \wd1=\@tempdima
            \else
               \textwidth=\wd0
               \setbox1 =\vbox{%
                 \noindent\hbox to \wd0{\hfill\GRAPHIC{#5}{#4}{#1}{#2}{#3}\hfill}\\%
                 \noindent\hbox{\QCBOptA}%
               }%
               \wd1=\wd0
            \fi
         \else
            \ifdim\wd0>0pt
              \hsize=\@tempdima
              \setbox1=\vbox{%
                \unskip\GRAPHIC{#5}{#4}{#1}{#2}{0pt}%
                \break
                \unskip\hbox to \@tempdima{\hfill \QCBOptA\hfill}%
              }%
              \wd1=\@tempdima
           \else
              \hsize=\@tempdima
              \setbox1=\vbox{%
                \unskip\GRAPHIC{#5}{#4}{#1}{#2}{0pt}%
              }%
              \wd1=\@tempdima
           \fi
         \fi
         \@tempdimb=\ht1
         \advance\@tempdimb by -#2
         \advance\@tempdimb by #3
         \leavevmode
         \raise -\@tempdimb \hbox{\box1}%
      \fi
      \egroup%
}%
\def\DFRAME#1#2#3#4#5{%
  \vspace\topsep
  \hfil\break
  \bgroup
     \leftskip\@flushglue
	 \rightskip\@flushglue
	 \parindent\z@
	 \parfillskip\z@skip
     \let\QCTOptA\empty
     \let\QCTOptB\empty
     \let\QCBOptA\empty
     \let\QCBOptB\empty
	 \vbox\bgroup
        \ifOverFrame 
           #5\QCTOptA\par
        \fi
        \GRAPHIC{#4}{#3}{#1}{#2}{\z@}%
        \ifUnderFrame 
           \break#5\QCBOptA
        \fi
	 \egroup
  \egroup
  \vspace\topsep
  \break
}%
\def\FFRAME#1#2#3#4#5#6#7{%
  \@ifundefined{floatstyle}
    {
     \begin{figure}[#1]%
    }
    {
	 \ifx#1h
      \begin{figure}[H]%
	 \else
      \begin{figure}[#1]%
	 \fi
	}
  \let\QCTOptA\empty
  \let\QCTOptB\empty
  \let\QCBOptA\empty
  \let\QCBOptB\empty
  \ifOverFrame
    #4
    \ifx\QCTOptA\empty
    \else
      \ifx\QCTOptB\empty
        \caption{\QCTOptA}%
      \else
        \caption[\QCTOptB]{\QCTOptA}%
      \fi
    \fi
    \ifUnderFrame\else
      \label{#5}%
    \fi
  \else
    \UnderFrametrue%
  \fi
  \begin{center}\GRAPHIC{#7}{#6}{#2}{#3}{\z@}\end{center}%
  \ifUnderFrame
    #4
    \ifx\QCBOptA\empty
      \caption{}%
    \else
      \ifx\QCBOptB\empty
        \caption{\QCBOptA}%
      \else
        \caption[\QCBOptB]{\QCBOptA}%
      \fi
    \fi
    \label{#5}%
  \fi
  \end{figure}%
 }%
\def\makeactives{
  \catcode`\"=\active
  \catcode`\;=\active
  \catcode`\:=\active
  \catcode`\'=\active
  \catcode`\~=\active
}
   \gdef\activesoff{%
      \def"{\string"}%
      \def;{\string;}%
      \def:{\string:}%
      \def'{\string'}%
      \def~{\string~}%
    }
\def\FRAME#1#2#3#4#5#6#7#8{%
 \bgroup
 \ifnum\@msidraft=\@ne
   \wasdrafttrue
 \else
   \wasdraftfalse%
 \fi
 \def\LaTeXparams{}%
 \dispkind=\z@
 \def\LaTeXparams{}%
 \doFRAMEparams{#1}%
 \ifnum\dispkind=\z@\IFRAME{#2}{#3}{#4}{#7}{#8}{#5}\else
  \ifnum\dispkind=\@ne\DFRAME{#2}{#3}{#7}{#8}{#5}\else
   \ifnum\dispkind=\tw@
    \edef\@tempa{\noexpand\FFRAME{\LaTeXparams}}%
    \@tempa{#2}{#3}{#5}{#6}{#7}{#8}%
    \fi
   \fi
  \fi
  \ifwasdraft\@msidraft=1\else\@msidraft=0\fi{}%
  \egroup
 }%
\def\TEXUX#1{"texux"}
\def\func#1{\mathop{\rm #1}\nolimits}%
\long\def\QQQ#1#2{%
     \long\expandafter\def\csname#1\endcsname{#2}}%
\long\def\QQA#1#2{}%
\def\QTR#1#2{{\csname#1\endcsname {#2}}}%
\def\EXPAND#1[#2]#3{}%
\def\NOEXPAND#1[#2]#3{}%
\def\LaTeXparent#1{}%
\def\ChildStyles#1{}%
\def\ChildDefaults#1{}%
\def\QTagDef#1#2#3{}%
  \providecommand{\UNICODE}[2][]{\protect\rule{.1in}{.1in}}
  \providecommand{\U}[1]{\protect\rule{.1in}{.1in}}
\def\QQfnmark#1{\footnotemark}
 \def\abstract{%
  \if@twocolumn
   \section*{Abstract (Not appropriate in this style!)}%
   \else \small 
   \begin{center}{\bf Abstract\vspace{-.5em}\vspace{\z@}}\end{center}%
   \quotation 
   \fi
  }%
   \def\registered{\relax\ifmmode{}\r@gistered
                    \else$\m@th\r@gistered$\fi}%
 \def\r@gistered{^{\ooalign
  {\hfil\raise.07ex\hbox{$\scriptstyle\rm\text{R}$}\hfil\crcr
  \mathhexbox20D}}}}{}%
\newdimen\theight
\def\newfmtname{LaTeX2e}
  \DeclareOldFontCommand{\rm}{\normalfont\rmfamily}{\mathrm}
  \DeclareOldFontCommand{\sf}{\normalfont\sffamily}{\mathsf}
  \DeclareOldFontCommand{\tt}{\normalfont\ttfamily}{\mathtt}
  \DeclareOldFontCommand{\bf}{\normalfont\bfseries}{\mathbf}
  \DeclareOldFontCommand{\it}{\normalfont\itshape}{\mathit}
  \DeclareOldFontCommand{\sl}{\normalfont\slshape}{\@nomath\sl}
  \DeclareOldFontCommand{\sc}{\normalfont\scshape}{\@nomath\sc}
\def\alpha{{\Greekmath 010B}}%
\def\beta{{\Greekmath 010C}}%
\def\gamma{{\Greekmath 010D}}%
\def\delta{{\Greekmath 010E}}%
\def\epsilon{{\Greekmath 010F}}%
\def\zeta{{\Greekmath 0110}}%
\def\eta{{\Greekmath 0111}}%
\def\theta{{\Greekmath 0112}}%
\def\iota{{\Greekmath 0113}}%
\def\kappa{{\Greekmath 0114}}%
\def\lambda{{\Greekmath 0115}}%
\def\mu{{\Greekmath 0116}}%
\def\nu{{\Greekmath 0117}}%
\def\xi{{\Greekmath 0118}}%
\def\pi{{\Greekmath 0119}}%
\def\rho{{\Greekmath 011A}}%
\def\sigma{{\Greekmath 011B}}%
\def\tau{{\Greekmath 011C}}%
\def\upsilon{{\Greekmath 011D}}%
\def\phi{{\Greekmath 011E}}%
\def\chi{{\Greekmath 011F}}%
\def\psi{{\Greekmath 0120}}%
\def\omega{{\Greekmath 0121}}%
\def\varepsilon{{\Greekmath 0122}}%
\def\vartheta{{\Greekmath 0123}}%
\def\varpi{{\Greekmath 0124}}%
\def\varrho{{\Greekmath 0125}}%
\def\varsigma{{\Greekmath 0126}}%
\def\varphi{{\Greekmath 0127}}%
\def\nabla{{\Greekmath 0272}}
\def\FindBoldGroup{%
   {\setbox0=\hbox{$\mathbf{x\global\edef\theboldgroup{\the\mathgroup}}$}}%
}
\def\Greekmath#1#2#3#4{%
    \if@compatibility
        \ifnum\mathgroup=\symbold
           \mathchoice{\mbox{\boldmath$\displaystyle\mathchar"#1#2#3#4$}}%
                      {\mbox{\boldmath$\textstyle\mathchar"#1#2#3#4$}}%
                      {\mbox{\boldmath$\scriptstyle\mathchar"#1#2#3#4$}}%
                      {\mbox{\boldmath$\scriptscriptstyle\mathchar"#1#2#3#4$}}%
        \else
           \mathchar"#1#2#3#4%
        \fi 
    \else 
        \FindBoldGroup
        \ifnum\mathgroup=\theboldgroup 
           \mathchoice{\mbox{\boldmath$\displaystyle\mathchar"#1#2#3#4$}}%
                      {\mbox{\boldmath$\textstyle\mathchar"#1#2#3#4$}}%
                      {\mbox{\boldmath$\scriptstyle\mathchar"#1#2#3#4$}}%
                      {\mbox{\boldmath$\scriptscriptstyle\mathchar"#1#2#3#4$}}%
        \else
           \mathchar"#1#2#3#4%
        \fi     	    
	  \fi}
\newif\ifGreekBold  \GreekBoldfalse
\let\SAVEPBF=\pbf
\def\pbf{\GreekBoldtrue\SAVEPBF}%
  \newcounter{equationnumber}  
  \def\mathletters{%
     \addtocounter{equation}{1}
     \edef\@currentlabel{\theequation}%
     \setcounter{equationnumber}{\c@equation}
     \setcounter{equation}{0}%
     \edef\theequation{\@currentlabel\noexpand\alph{equation}}%
  }
    \def\BibTeX{{\rm B\kern-.05em{\sc i\kern-.025em b}\kern-.08em
                 T\kern-.1667em\lower.7ex\hbox{E}\kern-.125emX}}}{}%
\def\AmS{{\protect\usefont{OMS}{cmsy}{m}{n}%
                A\kern-.1667em\lower.5ex\hbox{M}\kern-.125emS}}}{}%
\def\@@eqncr{\let\@tempa\relax
    \ifcase\@eqcnt \def\@tempa{& & &}\or \def\@tempa{& &}%
      \else \def\@tempa{&}\fi
     \@tempa
     \if@eqnsw
        \iftag@
           \@taggnum
        \else
           \@eqnnum\stepcounter{equation}%
        \fi
     \fi
     \global\tag@false
     \global\@eqnswtrue
     \global\@eqcnt\z@\cr}
\def\TCItag{\@ifnextchar*{\@TCItagstar}{\@TCItag}}
\def\@TCItag#1{%
    \global\tag@true
    \global\def\@taggnum{(#1)}%
    \global\def\@currentlabel{#1}}
\def\@TCItagstar*#1{%
    \global\tag@true
    \global\def\@taggnum{#1}%
    \global\def\@currentlabel{#1}}
\def\tint{\msi@int\textstyle\int}%
\def\tiint{\msi@int\textstyle\iint}%
\def\tiiint{\msi@int\textstyle\iiint}%
\def\tiiiint{\msi@int\textstyle\iiiint}%
\def\tidotsint{\msi@int\textstyle\idotsint}%
\def\toint{\msi@int\textstyle\oint}%
\newtoks\temptoksa
\newtoks\temptoksb
\newtoks\temptoksc
\def\msi@int#1#2{%
 \def\@temp{{#1#2\the\temptoksc_{\the\temptoksa}^{\the\temptoksb}}}%
 \futurelet\@nextcs
 \@int
}
\def\@int{%
   \ifx\@nextcs\limits
      \typeout{Found limits}%
      \temptoksc={\limits}%
	  \let\@next\@intgobble%
   \else\ifx\@nextcs\nolimits
      \typeout{Found nolimits}%
      \temptoksc={\nolimits}%
	  \let\@next\@intgobble%
   \else
      \typeout{Did not find limits or no limits}%
      \temptoksc={}%
      \let\@next\msi@limits%
   \fi\fi
   \@next   
}%
\def\@intgobble#1{%
   \typeout{arg is #1}%
   \msi@limits
}
\def\msi@limits{%
   \temptoksa={}%
   \temptoksb={}%
   \@ifnextchar_{\@limitsa}{\@limitsb}%
}
\def\@limitsa_#1{%
   \temptoksa={#1}%
   \@ifnextchar^{\@limitsc}{\@temp}%
}
\def\@limitsb{%
   \@ifnextchar^{\@limitsc}{\@temp}%
}
\def\@limitsc^#1{%
   \temptoksb={#1}%
   \@ifnextchar_{\@limitsd}{\@temp}%
}
\def\@limitsd_#1{%
   \temptoksa={#1}%
   \@temp
}
\def\dint{\msi@int\displaystyle\int}%
\def\diint{\msi@int\displaystyle\iint}%
\def\diiint{\msi@int\displaystyle\iiint}%
\def\diiiint{\msi@int\displaystyle\iiiint}%
\def\didotsint{\msi@int\displaystyle\idotsint}%
\def\doint{\msi@int\displaystyle\oint}%
\def\ExitTCILatex{\makeatother }
\if@compatibility\message{amsmath already loaded}\fi\aftergroup\ExitTCILatex}
\if@compatibility\message{amstex already loaded}\fi\aftergroup\ExitTCILatex}
\if@compatibility\message{amsgen already loaded}\fi\aftergroup\ExitTCILatex}
\let\DOTSI\relax
\def\RIfM@{\relax\ifmmode}%
\def\FN@{\futurelet\next}%
\def\iint{\DOTSI\intno@\tw@\FN@\ints@}%
\def\iiint{\DOTSI\intno@\thr@@\FN@\ints@}%
\def\iiiint{\DOTSI\intno@4 \FN@\ints@}%
\def\idotsint{\DOTSI\intno@\z@\FN@\ints@}%
\def\ints@{\findlimits@\ints@@}%
\newif\iflimtoken@
\newif\iflimits@
\def\findlimits@{\limtoken@true\ifx\next\limits\limits@true
 \else\ifx\next\nolimits\limits@false\else
 \limtoken@false\ifx\ilimits@\nolimits\limits@false\else
 \ifinner\limits@false\else\limits@true\fi\fi\fi\fi}%
\def\multint@{\int\ifnum\intno@=\z@\intdots@                          
 \else\intkern@\fi                                                    
 \ifnum\intno@>\tw@\int\intkern@\fi                                   
 \ifnum\intno@>\thr@@\int\intkern@\fi                                 
 \int}
\def\multintlimits@{\intop\ifnum\intno@=\z@\intdots@\else\intkern@\fi
 \ifnum\intno@>\tw@\intop\intkern@\fi
 \ifnum\intno@>\thr@@\intop\intkern@\fi\intop}%
\def\intic@{%
    \mathchoice{\hskip.5em}{\hskip.4em}{\hskip.4em}{\hskip.4em}}%
\def\negintic@{\mathchoice
 {\hskip-.5em}{\hskip-.4em}{\hskip-.4em}{\hskip-.4em}}%
\def\ints@@{\iflimtoken@                                              
 \def\ints@@@{\iflimits@\negintic@
   \mathop{\intic@\multintlimits@}\limits                             
  \else\multint@\nolimits\fi                                          
  \eat@}
 \else                                                                
 \def\ints@@@{\iflimits@\negintic@
  \mathop{\intic@\multintlimits@}\limits\else
  \multint@\nolimits\fi}\fi\ints@@@}%
\def\intkern@{\mathchoice{\!\!\!}{\!\!}{\!\!}{\!\!}}%
\def\plaincdots@{\mathinner{\cdotp\cdotp\cdotp}}%
\def\intdots@{\mathchoice{\plaincdots@}%
 {{\cdotp}\mkern1.5mu{\cdotp}\mkern1.5mu{\cdotp}}%
 {{\cdotp}\mkern1mu{\cdotp}\mkern1mu{\cdotp}}%
 {{\cdotp}\mkern1mu{\cdotp}\mkern1mu{\cdotp}}}%
\def\RIfM@{\relax\protect\ifmmode}
\def\text{\RIfM@\expandafter\text@\else\expandafter\mbox\fi}
\let\nfss@text\text
\def\text@#1{\mathchoice
   {\textdef@\displaystyle\f@size{#1}}%
   {\textdef@\textstyle\tf@size{\firstchoice@false #1}}%
   {\textdef@\textstyle\sf@size{\firstchoice@false #1}}%
   {\textdef@\textstyle \ssf@size{\firstchoice@false #1}}%
   \glb@settings}
\def\textdef@#1#2#3{\hbox{{%
                    \everymath{#1}%
                    \let\f@size#2\selectfont
                    #3}}}
\newif\iffirstchoice@
\def\Let@{\relax\iffalse{\fi\let\\=\cr\iffalse}\fi}%
\def\vspace@{\def\vspace##1{\crcr\noalign{\vskip##1\relax}}}%
\def\multilimits@{\bgroup\vspace@\Let@
 \baselineskip\fontdimen10 \scriptfont\tw@
 \advance\baselineskip\fontdimen12 \scriptfont\tw@
 \lineskip\thr@@\fontdimen8 \scriptfont\thr@@
 \lineskiplimit\lineskip
 \vbox\bgroup\ialign\bgroup\hfil$\m@th\scriptstyle{##}$\hfil\crcr}%
\def\Sb{_\multilimits@}%
\def\endSb{\crcr\egroup\egroup\egroup}%
\def\Sp{^\multilimits@}%
\newdimen\ex@
\def\rightarrowfill@#1{$#1\m@th\mathord-\mkern-6mu\cleaders
 \hbox{$#1\mkern-2mu\mathord-\mkern-2mu$}\hfill
 \mkern-6mu\mathord\rightarrow$}%
\def\leftarrowfill@#1{$#1\m@th\mathord\leftarrow\mkern-6mu\cleaders
 \hbox{$#1\mkern-2mu\mathord-\mkern-2mu$}\hfill\mkern-6mu\mathord-$}%
\def\leftrightarrowfill@#1{$#1\m@th\mathord\leftarrow
\mkern-6mu\cleaders
 \hbox{$#1\mkern-2mu\mathord-\mkern-2mu$}\hfill
 \mkern-6mu\mathord\rightarrow$}%
\def\overrightarrow{\mathpalette\overrightarrow@}%
\def\overrightarrow@#1#2{\vbox{\ialign{##\crcr\rightarrowfill@#1\crcr
 \noalign{\kern-\ex@\nointerlineskip}$\m@th\hfil#1#2\hfil$\crcr}}}%
\def\overleftarrow{\mathpalette\overleftarrow@}%
\def\overleftarrow@#1#2{\vbox{\ialign{##\crcr\leftarrowfill@#1\crcr
 \noalign{\kern-\ex@\nointerlineskip}$\m@th\hfil#1#2\hfil$\crcr}}}%
\def\overleftrightarrow{\mathpalette\overleftrightarrow@}%
\def\overleftrightarrow@#1#2{\vbox{\ialign{##\crcr
   \leftrightarrowfill@#1\crcr
 \noalign{\kern-\ex@\nointerlineskip}$\m@th\hfil#1#2\hfil$\crcr}}}%
\def\underrightarrow{\mathpalette\underrightarrow@}%
\def\underrightarrow@#1#2{\vtop{\ialign{##\crcr$\m@th\hfil#1#2\hfil
  $\crcr\noalign{\nointerlineskip}\rightarrowfill@#1\crcr}}}%
\def\underleftarrow{\mathpalette\underleftarrow@}%
\def\underleftarrow@#1#2{\vtop{\ialign{##\crcr$\m@th\hfil#1#2\hfil
  $\crcr\noalign{\nointerlineskip}\leftarrowfill@#1\crcr}}}%
\def\underleftrightarrow{\mathpalette\underleftrightarrow@}%
\def\underleftrightarrow@#1#2{\vtop{\ialign{##\crcr$\m@th
  \hfil#1#2\hfil$\crcr
 \noalign{\nointerlineskip}\leftrightarrowfill@#1\crcr}}}%
\def\qopnamewl@#1{\mathop{\operator@font#1}\nlimits@}
\let\nlimits@\displaylimits
\def\setboxz@h{\setbox\z@\hbox}
\def\varlim@#1#2{\mathop{\vtop{\ialign{##\crcr
 \hfil$#1\m@th\operator@font lim$\hfil\crcr
 \noalign{\nointerlineskip}#2#1\crcr
 \noalign{\nointerlineskip\kern-\ex@}\crcr}}}}
 \def\rightarrowfill@#1{\m@th\setboxz@h{$#1-$}\ht\z@\z@
  $#1\copy\z@\mkern-6mu\cleaders
  \hbox{$#1\mkern-2mu\box\z@\mkern-2mu$}\hfill
  \mkern-6mu\mathord\rightarrow$}
\def\leftarrowfill@#1{\m@th\setboxz@h{$#1-$}\ht\z@\z@
  $#1\mathord\leftarrow\mkern-6mu\cleaders
  \hbox{$#1\mkern-2mu\copy\z@\mkern-2mu$}\hfill
  \mkern-6mu\box\z@$}
\def\projlim{\qopnamewl@{proj\,lim}}
\def\injlim{\qopnamewl@{inj\,lim}}
\def\varinjlim{\mathpalette\varlim@\rightarrowfill@}
\def\varprojlim{\mathpalette\varlim@\leftarrowfill@}
\def\varliminf{\mathpalette\varliminf@{}}
\def\varliminf@#1{\mathop{\underline{\vrule\@depth.2\ex@\@width\z@
   \hbox{$#1\m@th\operator@font lim$}}}}
\def\varlimsup{\mathpalette\varlimsup@{}}
\def\varlimsup@#1{\mathop{\overline
  {\hbox{$#1\m@th\operator@font lim$}}}}
\def\align{\@verbatim \frenchspacing\@vobeyspaces \@alignverbatim
You are using the "align" environment in a style in which it is not defined.}
\let\csname endalign*\endcsname =\endtrivlist
\def\alignat{\@verbatim \frenchspacing\@vobeyspaces \@alignatverbatim
You are using the "alignat" environment in a style in which it is not defined.}
\let\csname endalignat*\endcsname =\endtrivlist
\def\xalignat{\@verbatim \frenchspacing\@vobeyspaces \@xalignatverbatim
You are using the "xalignat" environment in a style in which it is not defined.}
\let\csname endxalignat*\endcsname =\endtrivlist
\def\gather{\@verbatim \frenchspacing\@vobeyspaces \@gatherverbatim
You are using the "gather" environment in a style in which it is not defined.}
\let\csname endgather*\endcsname =\endtrivlist
\def\multiline{\@verbatim \frenchspacing\@vobeyspaces \@multilineverbatim
You are using the "multiline" environment in a style in which it is not defined.}
\let\csname endmultiline*\endcsname =\endtrivlist
\def\arrax{\@verbatim \frenchspacing\@vobeyspaces \@arraxverbatim
You are using a type of "array" construct that is only allowed in AmS-LaTeX.}
\def\tabulax{\@verbatim \frenchspacing\@vobeyspaces \@tabulaxverbatim
You are using a type of "tabular" construct that is only allowed in AmS-LaTeX.}
\let\csname endarrax*\endcsname =\endtrivlist
\let\csname endtabulax*\endcsname =\endtrivlist
 \def\endequation{%
     \ifmmode\ifinner 
      \iftag@
        \addtocounter{equation}{-1} 
        $\hfil
           \displaywidth\linewidth\@taggnum\egroup \endtrivlist
        \global\tag@false
        \global\@ignoretrue   
      \else
        $\hfil
           \displaywidth\linewidth\@eqnnum\egroup \endtrivlist
        \global\tag@false
        \global\@ignoretrue 
      \fi
     \else   
      \iftag@
        \addtocounter{equation}{-1} 
        \eqno \hbox{\@taggnum}
        \global\tag@false%
        $$\global\@ignoretrue
      \else
        \eqno \hbox{\@eqnnum}
        $$\global\@ignoretrue
      \fi
     \fi\fi
 } 
 \newif\iftag@ \tag@false
 \def\TCItag{\@ifnextchar*{\@TCItagstar}{\@TCItag}}
 \def\@TCItag#1{%
     \global\tag@true
     \global\def\@taggnum{(#1)}%
     \global\def\@currentlabel{#1}}
 \def\@TCItagstar*#1{%
     \global\tag@true
     \global\def\@taggnum{#1}%
     \global\def\@currentlabel{#1}}
     \def\tag{\@ifnextchar*{\@tagstar}{\@tag}}
     \def\@tag#1{%
         \global\tag@true
         \global\def\@taggnum{(#1)}}
     \def\@tagstar*#1{%
         \global\tag@true
         \global\def\@taggnum{#1}}
\newcommand{\mylist}{
   \begin{list}{-}
     {
      \setlength{\topsep}{2.5ex}
      \setlength{\itemsep}{2ex}
      \setlength{\leftmargin}{2.5ex}
      } }
\newcommand{\listend}{
    \end{list}  }
\begin{document}

\title{Proximity Factors of Lattice Reduction-Aided Precoding for Multiantenna Broadcast}

\pubid{} \specialpapernotice{}
\author{ \authorblockN{Shuiyin Liu and Cong Ling}
\authorblockA{Department of Electrical and   \\       Electronic Engineering
      \\   Imperial College London       \\      London, SW7 2AZ, UK    \\   shuiyin.liu06@imperial.ac.uk,
cling@ieee.org} \and
\authorblockN{Xiaofu Wu}
\authorblockA{Nanjing University of Posts    \\       Telecommunications
      \\          \\      Nanjing, 210003, China   \\   xfuwu@ieee.org} }

\maketitle%

\begin{abstract}%

Lattice precoding is an effective strategy for multiantenna
broadcast. In this paper, we show that approximate lattice precoding
in multiantenna broadcast is a variant of the closest vector problem
(CVP) known as $\eta$-CVP. The proximity factors of lattice
reduction-aided precoding are defined, and their bounds are derived,
which measure the worst-case loss in power efficiency compared to
sphere precoding. Unlike decoding applications, this analysis does
not suffer from the boundary effect of a finite constellation, since
the underlying lattice in multiantenna broadcast is indeed infinite.


\end{abstract}%

\section{Introduction}

Broadcast is referred to as the application where a single
transmitter sends different messages to many users simultaneously.
It may arise, for example, in the downlink of a multiuser
communication system where the base station wants to communicate
with the users in the area of coverage. The multi-input multi-output
(MIMO) technology offers a new opportunity for developing efficient
broadcast strategies.

The capacity of a MIMO broadcast channel has been determined in
\cite{Weingarten06}, where it was shown that the so-called
dirty-paper coding is instrumental to achieving the capacity.
Multiple antennas allow to pre-cancel the interuser interference,
which is known at the transmitter in the broadcast application. The
lattice method represents a major approach to cancelling known
interference \cite{Erez05}, and specifically, to precoding for MIMO
broadcast. However, dirty-paper coding suffers from high complexity.

Hochwald \textit{et al.} \cite{Hochwald05} formulated precoding as a
decoding problem at the transmitter. Their technique, termed
``vector perturbation", corresponds to solving the closest lattice
vector problem (CVP). It requires the use of the sphere precoder
\cite{viterbo}, whose average complexity grows quickly with the
system size. Earlier, the idea of precoding via an algorithmic
search over modulo equivalent points was proposed by Fischer
\textit{et al.} for the intersymbyol interference channel
\cite{Fischer95}. More recently, reference \cite{Maurer11}
considered some practical issues in the implementation of vector
perturbation.

To reduce the complexity, lattice reduction (LR) can be used,
\textit{i.e.}, an approximate solution is found by zero-forcing (ZF)
or  successive interference cancelation (SIC) on a reduced lattice
\cite{Babai,Windpassinger04}. Another scheme of approximate lattice
precoding was proposed in \cite{Taherzadeh07}
(we will show that it is actually equivalent to LR-aided ZF in \cite%
{Windpassinger04}), which was shown to achieve the full diversity
order.

In contrast to the complexity analysis of sphere decoding
\cite{jalden}, the complexity of sphere precoding is not available
in literature. Moreover, the signal-to-noise (SNR) gap between
sphere precoding and LR-aided precoding has not been analyzed,
although it has been done for decoding \cite{LingIT07}. In this
paper, we investigate these aspects of lattice precoding algorithms.
We view the precoding problem as a variant of the CVP known as
$\eta$-CVP. This view enables us to derive the proximity factors for
lattice precoding, which measure the worst-case loss in power
efficiency of LR-aided precoding schemes.

The paper is organized as follows: Section II presents the model of
MIMO broadcast using lattice precoding, and investigates its
complexity. In Section III the analysis of the proximity factors is
given. Section V is a discussion.

\textit{Notation}: The transpose, inverse, pseudoinverse of a matrix $%
\mathbf{B}$ by $\mathbf{B}^{T}$, $\mathbf{B}^{-1}$, and $\mathbf{B}^{\dagger
}$, respectively, and the Euclidean length $\Vert \mathbf{u}\Vert =\sqrt{%
\langle \mathbf{u},\mathbf{u}\rangle }$. $\lceil x\rfloor $ rounds to a
closest integer.

\section{Lattice Precoding for MIMO Broadcast}

Consider a MIMO broadcast system including one transmitter, equipped
with $n$ antennas, and $n$ receivers, each equipped with a single
antenna \cite{Hochwald05}. For convenience, we use the real-valued
signal model%
\begin{equation}
\mathbf{y}=\mathbf{Hs}+\mathbf{n},
\end{equation}%
where $\mathbf{y}$ is the received signal vector at the users,
$\mathbf{H}\in \mathbb{R}^{n\times n}$ is a full-rank channel
matrix, $\mathbf{s}$ is the transmitted signal, and $\mathbf{n}$ is
the noise vector. The entry $h_{i,j}$ of $\mathbf{H}$ indicates the
channel coefficient between transmit antenna $i$ and user $j$.
$\mathbf{s}$ is derived from the data vector $\mathbf{x=}\left[
x_{1},\text{...},x_{n}\right] ^{T}$. We assume that $\mathbf{x} \in
[-A/2,A/2]^n$ is taken from the intersection of a finite hypercube
and an integer lattice. The entries of $\mathbf{n}$ are i.i.d.
Gaussian with variance $\sigma ^{2}$ each.

\subsection{Lattice Preliminaries}

An $n$-dimensional \emph{lattice} in the $m$-dimensional Euclidean space $%
\mathbb{R}^{m}$ ($n\leq m$) is the set of integer linear combinations of $n$
independent vectors $\mathbf{b}_{1},\ldots ,\mathbf{b}_{n}\in \mathbb{R}^{m}$%
:%
\begin{equation*}
\mathcal{L}\left( \mathbf{B}\right) \mathbf{=}\left\{ \sum_{i=1}^{n}x_{i}%
\mathbf{b}_{i}\left\vert x_{i}\in \mathbb{Z}\text{, }i=1,\ldots n\right.
\right\} .
\end{equation*}%
The matrix $\mathbf{B=}\left[ \mathbf{b}_{1}\cdots \mathbf{b}_{n}\right] $
is a basis of the lattice $\mathcal{L}(\mathbf{B})$. In matrix form, $%
\mathcal{L}(\mathbf{B})=\left\{ \mathbf{Bx}\text{ : }\mathbf{x\in }\text{ }%
\mathbb{Z}^{n}\right\} $. For any point $\mathbf{y\in }\mathbb{R}^{m}$ and
any lattice $\mathcal{L}\left( \mathbf{B}\right) $, the distance of $\mathbf{%
y}$ to the lattice is $\dist(\mathbf{y},\mathbf{B})=\min_{x\in \mathbb{Z}^{n}}%
\norm{\mathbf{y}-\mathbf{Bx}}$. A \emph{shortest vector} of a lattice $%
\mathcal{L}\left( \mathbf{B}\right) $ is a non-zero vector in $\mathcal{L}%
\left( \mathbf{B}\right) $ with the smallest Euclidean norm. The length of
the shortest vector, often referred to as the \emph{minimum distance}, of $%
\mathcal{L}\left( \mathbf{B}\right) $ is denoted by $\lambda _{1}$.

A lattice has infinitely many bases. In general, every matrix $\mathbf{%
\widetilde{B}=BU}$ is also a basis, where $\mathbf{U}$ is an \textit{unimodular} matrix, i.e., $%
\det (\mathbf{U})=\pm 1$ and all elements of $\mathbf{U}$ are
integers. The aim of \emph{%
lattice reduction} is to find a good basis for a given lattice. In
many applications, it is advantageous to have the basis vectors as
short as possible. The celebrated LLL algorithm is the first
polynomial (average) time algorithm which finds a vector not much
longer than
the shortest nonzero vector.

Let $\mathbf{\hat{b}}_{1}$,...,$\mathbf{\hat{b}}_{n}$ be the
Gram-Schmidt vectors corresponding to a basis
$\mathbf{b}_{1}$,...,$\mathbf{b}_{n}$, where $\mathbf{%
\hat{b}}_{i}$ is the projection of $\mathbf{b}_{i}$ orthogonal to
the vector space generated by
$\mathbf{b}_{1}$,...,$\mathbf{b}_{i-1}$. These are the vectors found
by the Gram-Schmidt algorithm for orthogonalization.
Gram-Schmidt orthogonalization (GSO) is closely related to QR decomposition $%
\mathbf{B}=\mathbf{QR}$. More precisely, one has the relations $\mu
_{j,i}=r_{i,j}/r_{i,i}$ and $\mathbf{\hat{b}}_{i}=$ $r_{i,i}\cdot \mathbf{q}%
_{i}$, where $\mathbf{q}_{i}$ is the $i^{th}$ column of
$\mathbf{Q}$.

A basis $\mathbf{B}$ is LLL reduced if%
\begin{equation}
\left\vert \mu _{i,j}\right\vert \leq 1/2
\end{equation}%
for $1\leq j<i\leq n$, and%
\begin{equation}
\Vert \mathbf{\hat{b}}_{i}\Vert ^{2}\geq \left( \delta -\mu
_{i,j}^{2}\right) \Vert \mathbf{\hat{b}}_{i-1}\Vert ^{2}  \label{R-LLL-2}
\end{equation}%
for $1<i\leq n$, where $1/4<\delta \leq 1$ is a factor selected to achieve a
good quality-complexity tradeoff.

We now define a variant of the CVP.

\begin{defn}
[$\eta $-CVP)] Given a lattice $\mathcal{%
L}\left( \mathbf{B}\right) $ and a vector $\mathbf{y\in }$ $\mathbb{R}^{m}$,
find a vector $\mathbf{B\hat{x}}\in \mathcal{L}\left( \mathbf{B}\right) $
such that $\Vert \mathbf{y}-\mathbf{B\hat{x}}\Vert $ $\leq \eta \dist(%
\mathbf{y},\mathbf{B})$.
\end{defn}

\subsection{Sphere Precoding}

In this method, the transmitted signal is given by \cite{Hochwald05}%
\begin{equation}
\mathbf{s}=\mathbf{B}( \mathbf{x}-A\mathbf{\hat{l}}) =
\mathbf{Bx}\mod \mathcal{L}(A\mathbf{B}),
\end{equation}%
where $\mathbf{B}\triangleq \mathbf{H}^{-1}$, and $\mathbf{\hat{l}}$
is an integer vector, chosen to minimize the
transmission power:%
\begin{equation}
\mathbf{\hat{l}}=\arg \min_{\mathbf{l}\in \mathbb{Z}^{n}}{\Vert \mathbf{B}(%
\mathbf{x}-A\mathbf{l})\Vert ^{2}}.  \label{min_power}
\end{equation}%
Note that $\mathbf{s} \in \mathcal{V}(\mathcal{L}(A\mathbf{B}))$
(the Voronoi region). The receivers apply the modulo function each,
obtaining
\begin{equation}
\begin{split}
\mathbf{y}\mod A& =\mathbf{H}\mathbf{B}(\mathbf{x}-A\mathbf{\hat{l}})+%
\mathbf{n}\mod A \\
& =(\mathbf{x}-A\mathbf{\hat{l}})+\mathbf{n}\mod A \\
& =\mathbf{x}+\mathbf{n}\mod A.
\end{split}
\label{rcv}
\end{equation}%
Namely, the data arrive at individual users free of interuser
interference; the only effect is noise. To solve the CVP
(\ref{min_power}), the sphere precoding algorithm originally
proposed for decoding purposes was used.

It is worth pointing out several distinctions between the CVP's in
decoding and precoding:

\begin{itemize}
\item Decoding gets easier for weaker noise, while noise has no impacts on
the hardness of lattice precoding.

\item The constellation in decoding is often finite, while the lattice in
precoding is infinite. Thus, the boundary errors in decoding will
not an issue in precoding.

\item The received signal in decoding has a Gaussian distribution centered at a
lattice point, while the input to precoding is roughly uniformly
distributed on a fundamental parallelepiped.
\end{itemize}

For these reasons, sphere precoding incurs more computational
complexity than sphere decoding at the same dimension $n$. Fincke
and Pohst \cite{fincke} proposed an algorithm to enumerate the
lattice points in a sphere, running on an LLL-reduced lattice, but
their complexity estimate was loose. Kannan's algorithm
\cite{kannan87} for HKZ reduction can be used to preprocess the
lattice, giving a CVP algorithm with $n^{n+o(n)}$ complexity. Hanrot
and Stehl\'e' improved the CVP complexity analysis to $n^{n/2+o(n)}$
\cite{Stehle07}.

On the other hand, Jald\'en and Ottersten \cite{jalden} showed that
the average complexity of sphere decoding is exponential with the
dimension for any fixed SNR; the constant within the exponent,
though, does decrease with SNR, meaning lower complexity at higher
SNR. However, the encouraging results for lattice decoding do not
extend to precoding. Noise, which is crucial to the decreasing
complexity of sphere decoding, does not even arise in lattice
precoding. Since the input is largely uniformly distributed in the
fundamental parallelepiped, the worst-case bound is a sensible
measure of complexity. Moreover, the paper \cite{jalden} assumed a
finite constellation, rendering the analysis inapplicable to an
infinite lattice, which is nonetheless the case for precoding
problems.

To conclude, the worst-cast complexity of sphere precoding is
super-exponential.

\subsection{Approximate Lattice Precoding}

\subsubsection{SIC Precoding}

To obtain a fast precoder, Windpassinger et al. \cite{Windpassinger04}
approximated the CVP by using lattice reduction, i.e., the closest vector is
replaced with Babai's approximations \cite{Babai}. Let $\widetilde{\mathbf{B}%
}$ designate the reduced basis, i.e., $\widetilde{\mathbf{B}}=\mathbf{B}%
\mathbf{U}$, where $\mathbf{U}$ is a unimodular matrix. Performing the QR decomposition $\widetilde{\mathbf{B}}=%
\mathbf{QR}$, where $\mathbf{Q}$ has orthogonal columns and
$\mathbf{R}$ is an upper triangular matrix with nonnegative diagonal
elements.

Let $\mathbf{u}=\mathbf{Q^{\dagger }\widetilde{B}x/}A$. An estimate of $%
\mathbf{\hat{l}}$ is then found by the SIC procedure:%
\begin{eqnarray}
\hat{l}_{n} &=&\left\lceil u_{n}/r_{n,n}\right\rfloor ,  \notag \\
\hat{l}_{i} &=&\left\lceil \frac{u_{i}-\sum_{j=i+1}^{n}r_{i,j}\hat{l}_{j}}{%
r_{i,i}}\right\rfloor ,i=n-1,...,1.
\end{eqnarray}%
The transmitted signal is given by%
\begin{equation}
\mathbf{s}=\mathbf{Bx}-A\widetilde{\mathbf{B}}\mathbf{\hat{l}.}
\end{equation}%
At the receivers, the modulo operation is applied, yielding%
\begin{equation}\label{SIC-receiver}
\begin{split}
\mathbf{y}\func{mod}A& =\mathbf{H}(\mathbf{Bx}-A\widetilde{\mathbf{B}}%
\mathbf{\hat{l}})+\mathbf{n}\text{ }\mod A \\
& =\mathbf{x-}A\mathbf{U\hat{l}}+\mathbf{n}\text{
}\mathrm{\mod}\text{
}A \\
& =\mathbf{x}+\mathbf{n}\mod A.
\end{split}%
\end{equation}

\subsubsection{ZF Precoding}

Let $\mathbf{u}=\mathbf{Bx/}A$. An estimate of $\mathbf{\hat{l}}$ is
found by ZF as follows%
\begin{eqnarray}
\mathbf{\hat{l}} \mathbf{=} \left\lceil \widetilde{\mathbf{B}}^{-1}\mathbf{u%
}\right\rfloor  \notag = \left\lceil
\frac{\mathbf{U}^{-1}\mathbf{x}}{A}\right\rfloor .
\end{eqnarray}%
The transmitted signal is given by%
\begin{eqnarray}
\mathbf{s} &=&\mathbf{Bx}-A\widetilde{\mathbf{B}}\mathbf{\hat{l}}  \notag \\
&=&\widetilde{\mathbf{B}}\left( \mathbf{U}^{-1}\mathbf{x}-A\left\lceil \frac{%
\mathbf{U}^{-1}\mathbf{x}}{A}\right\rfloor \right)  \notag \\
&=&\widetilde{\mathbf{B}}\left( \mathbf{U}^{-1}\mathbf{x}\text{ }\func{mod}%
A\right) .  \label{RF_Precoding_Equivalent}
\end{eqnarray}%
The second line of (\ref{RF_Precoding_Equivalent}) represents the
transmission scheme in \cite{Windpassinger04}, while the third line
corresponds to the transmission scheme in \cite{Taherzadeh07}.
Therefore, the schemes proposed in
\cite{Windpassinger04,Taherzadeh07} are equivalent. To the best of
our knowledge, this equivalence is not known in literature. At the
receivers, the modulo operation is applied, yielding the same as
(\ref{SIC-receiver}).

\subsection{Reduction Criteria}

To summarize, the purpose of approximate lattice precoding is to find a
sub-optimal solution $\mathbf{\hat{l}}$ that can reduce the norm $%
\left\Vert \mathbf{s}\right\Vert $. Withe lattice reduction, the
transmitted vector $\mathbf{s}$ falls into the fundamental
parallelepiped (for ZF) or the rectangle spanned by the Gram-Schmidt
vectors of the reduced basis $A\mathbf{B}$ (for SIC). In both cases,
the transmission power is proportional to the second moment over the
respective regions. Let $V=|\det \mathbf{B}|$ be the fundamental
volume of $\mathcal{L}(\mathbf{B})$, and $\mathcal{P}$ be its
fundamental parallelepiped. Let $\|\mathbf{B}\|^2$ be the Frobenius
norm of $\mathbf{B}$, and $\mathbf{\hat{B}}$ be the Gram-Schmidt
matrix for $\mathbf{B}$. Using a uniform-distribution approximation,
the transmission powers associated with the approximate lattice
precoders are respectively given by
\begin{equation}\label{}
P_{\text{ZF}} =
\frac{A^2}{V}\int_{\mathcal{P}}\|\mathbf{x}\|^2d\mathbf{x} =
\frac{A^2}{12}\sum_{i=1}^n{\|\mathbf{b}_i\|^2}
=\frac{A^2}{12}\|\mathbf{B}\|^2
\end{equation}
for ZF, and
\begin{equation}\label{}
P_{\text{SIC}} =
\frac{A^2}{12}\sum_{i=1}^n{\|\mathbf{\hat{b}}_i\|^2}=\frac{A^2}{12}\|\mathbf{\hat{B}}\|^2
\end{equation}
for SIC. Therefore, the objective of lattice reduction in this
application is to minimize the Frobenious norm of $\mathbf{B}$ or
$\mathbf{\hat{B}}$. However, it is computationally hard to exactly
accomplish this objective. Thus, the LLL algorithm is often used.

\section{Proximity Factors}

We want to understand the performance of approximate lattice
precoding. To do this, we compare the transmission powers with that
of sphere encoding, under the condition that they have the same
error performance, namely, (\ref{rcv}) and (\ref{SIC-receiver}) hold
at the receivers. This is a standard approach to calculating the
``coding gain"\footnote{In practice, power normalization is applied
at the transmitter in vector perturbation \cite{Hochwald05}, yet
such a scaling factor has no impact on the ``coding gain".}.

The transmission power of sphere precoding is given by
\begin{equation}
P_{\text{SP}} = \frac{A^2}{V}\int_{\mathcal{V}}\|\mathbf{x}\|^2
d\mathbf{x} \triangleq A^2 \sigma^2(\mathcal{V})
\end{equation}
where $\mathcal{V}$ denotes the Voronoi region of
$\mathcal{L}(\mathbf{B})$, and $\sigma^2(\mathcal{V})$ is the second
moment of $\mathcal{V}$. Then, the SNR gap is asymptotically given
by
\begin{equation}
\rho = \left\{
  \begin{array}{ll}
    \frac{\|\mathbf{B}\|^2}{12\sigma^2(\mathcal{V})}, & \hbox{for ZF;} \\
    \frac{\|\mathbf{\hat{B}}\|^2}{12\sigma^2(\mathcal{V})}, & \hbox{for SIC.}
  \end{array}
\right.
\end{equation}

Unfortunately, it is difficult to compute $\rho$, and we resort to
the proximity factors of LLL reduction-aided precoding, which
measure the worst-case loss in power efficiency relative to sphere
precoding. More formally, we
define the proximity factor as %
\begin{equation}
F_{P}\triangleq
\left\{
  \begin{array}{ll}
    \sup \frac{\Vert \mathbf{s}\Vert _{\text{ZF}}^{2}}{\Vert \mathbf{%
s}\Vert _{\text{SP}}^{2}}, & \hbox{for ZF;} \\
    \sup \frac{\Vert \mathbf{s}\Vert _{\text{SIC}}^{2}}{\Vert \mathbf{%
s}\Vert _{\text{SP}}^{2}}, & \hbox{for SIC.}
  \end{array}
\right.
\label{PF}
\end{equation}%
Obviously, $\rho \leq F_P$. This viewpoint implies that the precoding problem is $\eta $-CVP:%
\begin{equation}
{\Vert \mathbf{B}(\mathbf{x}-A\mathbf{\hat{l}}}{)\Vert }\leq \eta
\min_{\mathbf{l}\in \mathbb{Z}^{n}}{\Vert \mathbf{B}(\mathbf{x}-A\mathbf{l}%
)\Vert ,}
\end{equation}%
and consequently, $F_{P}\leq \eta ^{2}$. Babai derived the value of
$\eta$ \cite{Babai} in the case of $\delta =3/4$. In what follows,
we will derive the bounds in the general case. Let
$\alpha=1/(\delta-1/4)$.

\begin{lem}
\label{PF_P_SIC} If the lattice basis is LLL-reduced, then SIC
solves $\eta$-CVP for $\eta=\eta_n= \alpha ^{n/2}/\sqrt{ \alpha
-1}$.
\end{lem}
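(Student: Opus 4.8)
The plan is to establish the per-instance inequality $\|\mathbf{B}(\mathbf{x}-A\hat{\mathbf{l}})\|^{2}\le\eta_{n}^{2}\,d^{2}$, where $\hat{\mathbf{l}}$ is the SIC output, $d=\min_{\mathbf{l}\in\mathbb{Z}^{n}}\|\mathbf{B}(\mathbf{x}-A\mathbf{l})\|$ is the sphere-precoding distance, and $\eta_{n}^{2}=\alpha^{n}/(\alpha-1)$; by scale invariance of the ratio I set $A=1$. Two ingredients are immediate. Since SIC is Babai's nearest-plane procedure on the reduced basis, its residual along each Gram--Schmidt direction $\mathbf{\hat{b}}_{i}$ has length at most $\tfrac12\|\mathbf{\hat{b}}_{i}\|$, whence $\|\mathbf{B}(\mathbf{x}-\hat{\mathbf{l}})\|^{2}\le\tfrac14\sum_{i=1}^{n}\|\mathbf{\hat{b}}_{i}\|^{2}$. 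And the Lov\'asz condition (\ref{R-LLL-2}) with $|\mu_{i,i-1}|\le\tfrac12$ gives $\|\mathbf{\hat{b}}_{i-1}\|^{2}\le\alpha\|\mathbf{\hat{b}}_{i}\|^{2}$, hence $\|\mathbf{\hat{b}}_{i}\|^{2}\le\alpha^{\,n-i}\|\mathbf{\hat{b}}_{n}\|^{2}$ and therefore $\tfrac14\sum_{i=1}^{n}\|\mathbf{\hat{b}}_{i}\|^{2}\le\tfrac14\|\mathbf{\hat{b}}_{n}\|^{2}(\alpha^{n}-1)/(\alpha-1)$.

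I would prove the bound by induction on $n$, splitting on the first SIC decision $\hat{l}_{n}$, i.e.\ the coordinate along $\mathbf{\hat{b}}_{n}$. The base case $n=1$ holds because SIC then rounds exactly, giving ratio $1\le\alpha/(\alpha-1)$. For the inductive step let $\mathbf{v}^{\ast}$ be the true closest lattice point and $z_{n}^{\ast}$ its (integer) $\mathbf{\hat{b}}_{n}$-coefficient. If $z_{n}^{\ast}=\hat{l}_{n}$, I project orthogonally onto $W=\mathrm{span}(\mathbf{b}_{1},\dots,\mathbf{b}_{n-1})$: because $\mathbf{\hat{b}}_{n}\perp W$, the shared last coordinate contributes the same orthogonal term $e_{n}^{2}$ to both the SIC error and to $d^{2}$, and the residual problem in $W$ is a CVP on the sublattice $\mathcal{L}(\mathbf{b}_{1},\dots,\mathbf{b}_{n-1})$, whose basis is again LLL-reduced with the same $\alpha$ (the Gram--Schmidt data for indices below $n$ are unchanged) and on which SIC proceeds verbatim. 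Writing $d'$ for the sub-distance, the induction hypothesis gives a sub-error of squared length at most $\eta_{n-1}^{2}d'^{2}$, and since $\eta_{n-1}^{2}\ge1$,
\[
\frac{\eta_{n-1}^{2}d'^{2}+e_{n}^{2}}{d'^{2}+e_{n}^{2}}\le\eta_{n-1}^{2}\le\eta_{n}^{2},
\]
which closes this case.

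The step I expect to be the main obstacle is the disagreement case $z_{n}^{\ast}\neq\hat{l}_{n}$, where the recursion breaks because the optimal point lies in a different $\mathbf{\hat{b}}_{n}$-layer than SIC's. The observation that saves it is that $\hat{l}_{n}$ is the \emph{nearest} integer to the $\mathbf{\hat{b}}_{n}$-coordinate of the target, so every other integer is at distance at least $\tfrac12$ from it; hence the optimal error $\mathbf{B}\mathbf{x}-\mathbf{v}^{\ast}$ has a component of length at least $\tfrac12\|\mathbf{\hat{b}}_{n}\|$ along $\mathbf{\hat{b}}_{n}$, giving the unconditional lower bound $d^{2}\ge\tfrac14\|\mathbf{\hat{b}}_{n}\|^{2}$. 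Combining this with the two bounds of the first paragraph yields $\|\mathbf{B}(\mathbf{x}-\hat{\mathbf{l}})\|^{2}/d^{2}\le(\alpha^{n}-1)/(\alpha-1)\le\eta_{n}^{2}$, completing the induction. What remains is routine bookkeeping: confirming that the sub-basis inherits LLL-reducedness and that the nearest-plane residuals split orthogonally as asserted.
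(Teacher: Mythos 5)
Your proposal is correct and follows essentially the same route as the paper: induction on the dimension with a case split on whether the optimal point's $\mathbf{\hat{b}}_n$-coefficient agrees with SIC's rounding, using the nearest-plane residual bound $\tfrac14\sum_i\|\mathbf{\hat{b}}_i\|^2$ together with the Lov\'asz-condition decay $\|\mathbf{\hat{b}}_i\|^2\le\alpha^{n-i}\|\mathbf{\hat{b}}_n\|^2$ and the lower bound $d\ge\tfrac12\|\mathbf{\hat{b}}_n\|$ in the disagreement case. Your handling of the agreement case is in fact marginally tighter (you get $\eta_{n-1}$ directly where the paper settles for $(1+\eta_{n-1}^2)^{1/2}$), but the structure and all key estimates coincide with the paper's proof.
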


\begin{proof}
Let $%
\mathbf{B}$ be a LLL reduced basis and $\mathbf{B}=\mathbf{\hat{B}}\mu ^{T}$
be the GSO of the basis $\mathbf{B}$. Given a vector $\mathbf{y\in }$ $%
\mathbb{R}^{m}$, we write $\mathbf{y}$ as a linear combination of the GS
vectors $\mathbf{y=}\sum_{i=1}^{n}\beta _{i}\mathbf{\hat{b}}_{i}$. Let $%
\mathbf{u=}\sum_{i=1}^{n}p_{i}\mathbf{\hat{b}}_{i}$ be the nearest neighbor
of $\mathbf{y}$ in $\mathcal{L}\left( \mathbf{B}\right) $. Let $\theta $ be
the integer nearest to $\beta _{n}$ and $\mathbf{y}^{\prime }\mathbf{=}%
\sum_{i=1}^{n-1}\beta _{i}\mathbf{\hat{b}}_{i}+\theta
\mathbf{\hat{b}}_{n}$, and $\mathbf{v=}\theta \mathbf{b}_{n}$. For
$n=1$, SIC can find the closest vector $\mathbf{u}$. For $n\geq 2$,
we
have%
\begin{eqnarray}
\left\Vert \mathbf{y}-\mathbf{y}^{\prime }\right\Vert =\left\vert
\theta -\beta _{n}\right\vert \Vert \mathbf{\hat{b}}_{n}\Vert \leq
\frac{\Vert \mathbf{\hat{b}}_{n}\Vert }{2},  \label{aCVP_NRP_1}
\end{eqnarray}%
and%
\begin{eqnarray}
\left\Vert \mathbf{y}-\mathbf{u}\right\Vert &=&\sqrt{\sum_{i=1}^{n}\left%
\vert \beta _{i}-p_{i}\right\vert ^{2}\Vert \mathbf{\hat{b}}_{i}\Vert }
\notag \\
&\geq &\left\vert \beta _{n}-p_{n}\right\vert \Vert \mathbf{\hat{b}}_{n}\Vert
\notag \\
&\geq &\left\vert \beta _{n}-\theta \right\vert \Vert \mathbf{\hat{b}}%
_{n}\Vert  \notag \\
&=&\left\Vert \mathbf{y}-\mathbf{y}^{\prime }\right\Vert .
\label{aCVP_NRP_2}
\end{eqnarray}%
Let $\mathbf{w}$ be the estimate of $\mathbf{u}$ found by SIC. From (\ref{aCVP_NRP_1}), we obtain%
\begin{equation}
\left\Vert \mathbf{y}-\mathbf{w}\right\Vert ^{2}\leq \frac{1}{4}%
\sum_{i=1}^{n}\left\Vert \mathbf{\hat{b}}_{i}\right\Vert ^{2}.
\end{equation}%
According to (\ref{R-LLL-2}), we have%
\begin{equation}
\left\Vert \mathbf{y}-\mathbf{w}\right\Vert \leq \frac{1}{2}\sqrt{\frac{%
\alpha ^{n}-1}{\alpha -1}}\Vert \mathbf{\hat{b}}_{n}\Vert .
\label{aCVP_case2}
\end{equation}%
If $p_{n}=\theta $, then%
\begin{equation}
\left\Vert \mathbf{y}^{\prime }-\mathbf{w}\right\Vert \leq \eta
_{n-1}\left\Vert \mathbf{y}^{\prime }-\mathbf{u}\right\Vert \leq
\eta _{n-1}\left\Vert \mathbf{y}-\mathbf{u}\right\Vert .
\label{aCVP_NRP_D1}
\end{equation}%
By (\ref{aCVP_NRP_2}) and (\ref{aCVP_NRP_D1}),
\begin{eqnarray*}
\left\Vert \mathbf{y}-\mathbf{w}\right\Vert &=&\left( \left\Vert \mathbf{y}-%
\mathbf{y}^{\prime }\right\Vert ^{2}+\left\Vert \mathbf{y}^{\prime }-\mathbf{%
w}\right\Vert ^{2}\right) ^{1/2} \\
&\leq &(1+\eta _{n-1}^{2})^{1/2}\left\Vert
\mathbf{y}-\mathbf{u}\right\Vert
\\
&<&\eta _{n}\left\Vert \mathbf{y}-\mathbf{u}\right\Vert .
\end{eqnarray*}%
If $p_{n}\neq \theta $, then%
\begin{equation}
\left\Vert \mathbf{y}-\mathbf{u}\right\Vert \geq \frac{\Vert \mathbf{\hat{b}}%
_{n}\Vert }{2}.
\end{equation}%
Combining this inequality with (\ref{aCVP_case2}), we obtain%
\begin{eqnarray}
\left\Vert \mathbf{y}-\mathbf{w}\right\Vert &\leq &\sqrt{\frac{\alpha ^{n}-1%
}{\alpha -1}}\left\Vert \mathbf{y}-\mathbf{u}\right\Vert  \notag \\
&<&\frac{\alpha ^{n/2}}{\sqrt{\alpha -1}}\left\Vert \mathbf{y}-\mathbf{u}%
\right\Vert .
\end{eqnarray}
\end{proof}

For $\alpha =2$, $\eta _{n}=\alpha ^{n/2}/\sqrt{\alpha -1}$ reduces
to Babai's upper bound $2^{n/2}$.

\begin{lem}
\label{PF_P_ZF} If the lattice basis is LLL-reduced, then ZF solves
$\eta$-CVP with $\eta=\eta _{n}=1+2n\left(
3\sqrt{\alpha}/2\right)^{n-1}$.
\end{lem}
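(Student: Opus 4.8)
The plan is to recognise that ZF precoding is exactly Babai's rounding-off procedure: with target $\mathbf{y}=\mathbf{Bx}$ and (scaled) lattice $\mathcal{L}(A\widetilde{\mathbf{B}})$, writing $\boldsymbol{\xi}=\widetilde{\mathbf{B}}^{-1}\mathbf{y}/A$ and rounding coordinatewise produces the output, and the scaling by $A$ is irrelevant to the approximation factor. So the lemma reduces to a generic statement: for an LLL-reduced basis, rounding-off solves $\eta$-CVP with $\eta_n=1+2n(3\sqrt{\alpha}/2)^{n-1}$. Let $\mathbf{u}$ be the true closest vector and $\mathbf{e}=\mathbf{y}-\mathbf{u}$, so $\norm{\mathbf{e}}=\dist(\mathbf{y},\mathbf{B})$, and let $\mathbf{w}$ be the rounding output. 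First I would split by the triangle inequality, $\norm{\mathbf{y}-\mathbf{w}}\le\norm{\mathbf{e}}+\norm{\mathbf{u}-\mathbf{w}}$, so that it remains to bound $\norm{\mathbf{u}-\mathbf{w}}$ by $2n(3\sqrt\alpha/2)^{n-1}\norm{\mathbf{e}}$; the leading $1$ in $\eta_n$ is precisely this first summand.

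Next, write $\mathbf{u}-\mathbf{w}=\mathbf{B}\mathbf{t}$ with $\mathbf{t}\in\Z^n$. The key elementary step is the bound $|t_i|\le 2\,|(\mathbf{B}^{-1}\mathbf{e})_i|$: since the rounded and true coefficients each differ from $\boldsymbol\xi$ by at most $1/2$ and $(\mathbf{B}^{-1}\mathbf{e})_i=\xi_i-c_i$, one gets $|t_i|\le|(\mathbf{B}^{-1}\mathbf{e})_i|+1/2$, and the factor $2$ follows from a two-case split on whether $|(\mathbf{B}^{-1}\mathbf{e})_i|$ is below or above $1/2$, using that $t_i$ is an integer (it must vanish in the first case). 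Thus $\norm{\mathbf{u}-\mathbf{w}}\le\sum_i|t_i|\,\norm{\mathbf{b}_i}\le 2\norm{\mathbf{e}}\sum_i \norm{(\mathbf{R}^{-1})_{i,\cdot}}\,\norm{\mathbf{b}_i}$, where I pass through the QR decomposition $\mathbf{B}=\mathbf{QR}$ and use $|(\mathbf{B}^{-1}\mathbf{e})_i|=|(\mathbf{R}^{-1}\mathbf{Q}^{\dagger}\mathbf{e})_i|\le\norm{(\mathbf{R}^{-1})_{i,\cdot}}\,\norm{\mathbf{e}}$, valid because $\mathbf{Q}$ has orthonormal columns so $\norm{\mathbf{Q}^{\dagger}\mathbf{e}}\le\norm{\mathbf{e}}$.

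The heart of the argument, and the step I expect to be the main obstacle, is bounding the rows of $\mathbf{R}^{-1}$ from the two LLL properties. I would set up the back-substitution recursion for the inverse of the upper-triangular $\mathbf{R}$ and, using size-reduction $|r_{i,j}|\le r_{i,i}/2$, show that the normalised entries $r_{j,j}|(\mathbf{R}^{-1})_{i,j}|$ obey $a_m\le\tfrac12\sum_{l<m}a_l$ with $m=j-i$, which solves to $a_m\le\tfrac12(3/2)^{m-1}$; this is where the factor $3/2$ is born. Combining with the ratio $r_{i,i}/r_{j,j}\le(\sqrt\alpha)^{j-i}$ coming from (\ref{R-LLL-2}) (i.e. $\norm{\mathbf{\hat{b}}_{i-1}}^2\le\alpha\norm{\mathbf{\hat{b}}_i}^2$), the two geometric factors merge into $3\sqrt\alpha/2$ and give a bound on $\norm{(\mathbf{R}^{-1})_{i,\cdot}}$ of the form $r_{i,i}^{-1}[1+\tfrac{\sqrt\alpha}{2}(n-i)(3\sqrt\alpha/2)^{n-i-1}]$. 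Separately, size-reduction together with the same Gram-Schmidt decay yields $\norm{\mathbf{b}_i}\le r_{i,i}\sqrt{1+\tfrac14\sum_{m=1}^{i-1}\alpha^m}$.

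Finally, I would multiply these two bounds: the factors $r_{i,i}$ cancel, so $\norm{(\mathbf{R}^{-1})_{i,\cdot}}\norm{\mathbf{b}_i}$ is a scale-free product of two geometric expressions. Summing over $i$ and verifying $\sum_i\norm{(\mathbf{R}^{-1})_{i,\cdot}}\norm{\mathbf{b}_i}\le n(3\sqrt\alpha/2)^{n-1}$ is then a routine but delicate geometric-series estimate, in which the $n$ summands supply the leading factor $n$; a quick check of the dominant term shows the governing constant is $\sqrt\alpha/(2\sqrt{\alpha-1})\le 1$ over the admissible range $\alpha\ge 4/3$, which is what makes the clean bound hold. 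This delivers $\norm{\mathbf{u}-\mathbf{w}}\le 2n(3\sqrt\alpha/2)^{n-1}\norm{\mathbf{e}}$ and hence the claimed $\eta_n$. The subtle points are aligning the exponents of the two geometric bounds so that they combine to exactly $(3\sqrt\alpha/2)^{n-1}$, and confirming that the sum over $i$ does not cost more than the advertised factor $n$.
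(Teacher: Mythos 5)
Your argument is correct in substance but takes a genuinely different route from the paper's. The paper never touches $\mathbf{R}^{-1}$: it writes $\mathbf{u}-\mathbf{w}=\sum_i\phi_i\mathbf{b}_i$, isolates the dominant coordinate $k$ with $\Vert\phi_k\mathbf{b}_k\Vert$ maximal so that $\Vert\mathbf{u}-\mathbf{w}\Vert\le n\Vert\phi_k\mathbf{b}_k\Vert$, and then lower-bounds $\Vert\mathbf{u}-\mathbf{y}\Vert=\vert\phi_k+\beta_k\vert\,\Vert\mathbf{b}_k-\mathbf{m}\Vert\ge\tfrac{\vert\phi_k\vert}{2}\left(\tfrac{2}{3\sqrt{\alpha}}\right)^{n-1}\Vert\mathbf{b}_k\Vert$ using the bound $\sin\theta_i\ge(2/(3\sqrt{\alpha}))^{n-1}$ on the angle between a reduced basis vector and the span of the others, imported from \cite{LingIT07}; the closing triangle inequality is the same in both arguments. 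Your version replaces that single imported geometric fact by a from-scratch estimate of the rows of $\mathbf{R}^{-1}$ (the recursion $a_m\le\tfrac12\sum_{l<m}a_l$ from size reduction, solved as $a_m\le\tfrac12(3/2)^{m-1}$, combined with $r_{i,i}\le\alpha^{(j-i)/2}r_{j,j}$ from the Lov\'asz condition) together with $\Vert\mathbf{b}_i\Vert\le r_{i,i}(1+\tfrac14\sum_{m=1}^{i-1}\alpha^m)^{1/2}$ --- which is essentially a re-derivation of the ingredients behind the sine bound --- at the price of the final summation $\sum_i\Vert(\mathbf{R}^{-1})_{i,\cdot}\Vert\,\Vert\mathbf{b}_i\Vert\le n(3\sqrt{\alpha}/2)^{n-1}$, which you rightly flag as the delicate step. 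It does go through: the summands decrease in $i$ by roughly a factor $3/2$ per step, and the largest ($i=1$) term is at most $\bigl(1-c+c\,(3\sqrt{\alpha}/2)^{n-1}\bigr)$ with $c=\sqrt{\alpha}/(3\sqrt{\alpha}-2)<1$ for all admissible $\alpha\ge 4/3$, hence below $(3\sqrt{\alpha}/2)^{n-1}$; but this per-term check must be done for all $i$, not only the extremes, before the factor $n$ can be claimed. What each approach buys: the paper's dominant-coordinate trick avoids that bookkeeping entirely but leans on an external lemma, whereas your argument is self-contained and makes explicit that the two geometric rates $3/2$ and $\sqrt{\alpha}$ come from size reduction and the Lov\'asz condition respectively.
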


\begin{proof}
\label{Prox_pre_zf}
Let $\mathbf{B}$ be a LLL reduced basis. Let
$\theta _{i}$ be the angle between $\mathbf{b}_{i}$ and the linear
space $\mathcal{S}\left( \left[
\mathbf{b}_{1}\text{,...,}\mathbf{b}_{i-1}\text{,}\mathbf{b}_{i+1}\text{,...,%
}\mathbf{b}_{n}\right] \right) $ spanned by the other $n-1$ basis
vectors. Recall the following bound \cite{LingIT07}
\begin{eqnarray}
\sin \theta _{i} &\geq & \left( \frac{2}{3\sqrt{\alpha}}\right)
^{n-1}. \label{LLL-ZF-RB}
\end{eqnarray}%
Since%
\begin{equation}
\sin \theta _{i}=\min_{\mathbf{m\in }\mathcal{S}}\frac{\left\Vert \mathbf{m-b%
}_{i}\right\Vert }{\left\Vert \mathbf{b}_{i}\right\Vert },
\label{Sin_difi}
\end{equation}%
we have
\begin{equation}\label{m-b-bound}
\left\Vert \mathbf{m-b}_{i}\right\Vert \geq \left(
\frac{2}{3\sqrt{\alpha}}\right) ^{n-1}\left\Vert
\mathbf{b}_{i}\right\Vert, \quad \forall \mathbf{m} \in \mathcal{S}.
\end{equation}

Let $\mathbf{w}$ be the lattice point found by ZF. Then%
\begin{equation}
\mathbf{w}-\mathbf{y=}\sum_{i=1}^{n}\beta _{i}\mathbf{b}_{i},
\end{equation}%
where $\left\vert \beta _{i}\right\vert \leq 1/2$, for $1\leq i\leq n$. Let $%
\mathbf{u}$ be the nearest neighbor of $\mathbf{y}$ in $\mathcal{L}\left(
\mathbf{B}\right) $. We may write%
\begin{equation}
\mathbf{u-w=}\sum_{i=1}^{n}\phi _{i}\mathbf{b}_{i},
\end{equation}%
where $\phi _{i}\in \mathbb{Z}$. We assume $\mathbf{u\neq w}$. Let $%
\left\Vert \phi _{k}\mathbf{b}_{k}\right\Vert =\max_{i}\left\Vert \phi _{i}%
\mathbf{b}_{i}\right\Vert $. Then%
\begin{equation}
\left\Vert \mathbf{u-w}\right\Vert \leq n\left\Vert \phi _{k}\mathbf{b}%
_{k}\right\Vert .  \label{aCVP_ZF_E1}
\end{equation}%
Meanwhile,%
\begin{eqnarray*}
\mathbf{u-y} &\mathbf{=}&\left( \mathbf{u-w}\right) +\left( \mathbf{w-y}%
\right) \\
&=&\left( \phi _{k}\mathbf{+}\beta _{k}\right) \left( \mathbf{b}%
_{k}-\mathbf{m}\right) ,
\end{eqnarray*}%
where%
\begin{equation*}
\mathbf{m}=-\frac{1}{\phi _{k}\mathbf{+}\beta
_{k}}\sum_{j\mathbf{\neq }k}\left( \phi _{j}\mathbf{+}\beta
_{j}\right) \mathbf{b}_{j}.
\end{equation*}%
By (\ref{m-b-bound}) and $\left\vert \beta _{i}\right\vert \leq 1/2$, we have%
\begin{eqnarray}
\left\Vert \mathbf{u-y}\right\Vert &\geq &\frac{\left\vert \phi
_{k}\right\vert }{2\left( 3\sqrt{\alpha}/2\right) ^{n-1}}\left\Vert
\mathbf{b}_{k}\right\Vert . \label{aCVP_ZF_E2}
\end{eqnarray}%
Combining (\ref{aCVP_ZF_E1}) and (\ref{aCVP_ZF_E2}), we have%
\begin{eqnarray}
\left\Vert \mathbf{u-w}\right\Vert &\leq &n\left\Vert \phi _{k}\mathbf{b}%
_{k}\right\Vert  \notag \\
&\leq &2n\left( 3\sqrt{\alpha}/2\right) ^{n-1} \left\Vert
\mathbf{u-y}\right\Vert .
\end{eqnarray}%
It is easy to see that%
\begin{eqnarray}
\left\Vert \mathbf{y-w}\right\Vert &\leq &\left\Vert \mathbf{y-u}\right\Vert
+\left\Vert \mathbf{u-w}\right\Vert  \notag \\
&\leq &\left( 1+2n\left( 3\sqrt{\alpha}/2\right) ^{n-1}\right)
\left\Vert \mathbf{y-u}\right\Vert.
\end{eqnarray}
\end{proof}

For $\alpha =2$, we have $\eta _{n}=1+2n\left( 9/2\right) ^{\left(
n-1\right) /2}$.

From the two lemmas, we have the following theorem for the proximity
factors:

\begin{theorem}
According to Lemmas \ref{PF_P_SIC} and \ref{PF_P_ZF}, we have%
\begin{eqnarray}
F_{P,\text{SIC}} &\leq&\frac{\alpha ^{n}}{\alpha -1} \label{PF_SIC}
\end{eqnarray}%
and%
\begin{eqnarray}
F_{P,\text{ZF}} &\leq &\left( 1+2n\left( 3\sqrt{\alpha}/2\right)
^{n-1}\right) ^{2}.
\end{eqnarray}%
\end{theorem}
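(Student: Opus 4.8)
The plan is to read both inequalities off directly from the two lemmas, using the relation $F_{P}\leq \eta^{2}$ recorded just before Lemma \ref{PF_P_SIC}. The first step is to make sure that relation is on solid footing. For either approximate precoder the transmitted signal is $\mathbf{s}=\mathbf{B}(\mathbf{x}-A\mathbf{\hat{l}})$, and identifying $\mathbf{y}=\mathbf{Bx}$ with the target vector and $A\mathbf{B}$ with the lattice, the $\eta$-CVP guarantee asserts $\|\mathbf{s}\|\leq \eta\,\min_{\mathbf{l}\in\mathbb{Z}^{n}}\|\mathbf{B}(\mathbf{x}-A\mathbf{l})\|=\eta\,\|\mathbf{s}\|_{\text{SP}}$, since the sphere precoder picks exactly the minimizing $\mathbf{l}$. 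Squaring and taking the supremum over all admissible inputs $\mathbf{x}$ then gives $F_{P}\leq \eta^{2}$ straight from the definition (\ref{PF}).

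The second step is pure substitution of the explicit constants. For SIC, Lemma \ref{PF_P_SIC} supplies $\eta=\eta_{n}=\alpha^{n/2}/\sqrt{\alpha-1}$, so $\eta_{n}^{2}=\alpha^{n}/(\alpha-1)$, which is precisely the asserted bound (\ref{PF_SIC}). For ZF, Lemma \ref{PF_P_ZF} supplies $\eta=\eta_{n}=1+2n(3\sqrt{\alpha}/2)^{n-1}$, and squaring reproduces the stated bound $F_{P,\text{ZF}}\leq(1+2n(3\sqrt{\alpha}/2)^{n-1})^{2}$ verbatim. No further estimation is needed.

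I expect there to be essentially no obstacle: the theorem is a direct corollary of the two lemmas combined with $F_{P}\leq \eta^{2}$, and all the genuine analytic work — the recursive bound for SIC built on the LLL condition (\ref{R-LLL-2}), and the angle bound (\ref{LLL-ZF-RB}) driving the ZF estimate — has already been discharged inside those lemmas. The only point worth a second glance is the identification in the first paragraph, namely that the denominator $\|\mathbf{s}\|_{\text{SP}}^{2}$ equals the squared lattice distance $\dist(\mathbf{Bx},A\mathbf{B})^{2}$, so that the worst-case power ratio in (\ref{PF}) really is controlled by the $\eta$-CVP constant rather than by some larger quantity; once that is confirmed, the bounds follow at once.
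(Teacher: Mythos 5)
Your proposal is correct and follows exactly the paper's route: the theorem is a direct corollary of the relation $F_{P}\leq\eta^{2}$ (established just before Lemma \ref{PF_P_SIC} via the $\eta$-CVP formulation) together with the explicit values of $\eta_{n}$ from the two lemmas. Your extra check that $\Vert\mathbf{s}\Vert_{\text{SP}}$ equals the lattice distance $\dist(\mathbf{Bx},A\mathbf{B})$ is a sensible confirmation of the identification the paper makes implicitly, and nothing further is needed.
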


These results show that the worst-case loss in power efficiency of
approximate lattice precoders is bounded above by a function of the
dimension of the lattice alone.

\section{Discussion}

Our main contribution in this paper was to view the LR-aided
precoding problem as $\eta $-CVP, compared to the viewpoint of
bounded distance decoding for LR-aided decoding \cite{LingIT07}.
This viewpoint allowed us to derive the proximity factors, which
measure the worst-cased bound for approximate lattice precoding.
Since the underlying lattice is infinite, this analysis is rigorous,
and it follows that LR-aided precoding also achieves full diversity.
The derived bounds may not be tight, but nonetheless give more
insights. Improving the bounds is the future work.

\section*{ACKNOWLEDGMENT}

The authors are grateful to the reviewers for their helpful
comments. The work of Xiaofu Wu was supported by the National
Science Foundation of China under Grants 60972060, 61032004 and the
National Key S\&T Project under Grant 2010ZX03003-003-01.

\balance

{\small
\bibliographystyle{IEEEtran}
\bibliography{IEEEabrv,LINGBIB}
}

\vfill

\end{document}